\newtheorem{lemma}{Lemma}
\newtheorem{thrm}{Theorem}
\renewcommand{\vec}[1]{\mathbf{#1}}
\DeclareMathOperator{\re}{Re}
\DeclareMathOperator{\im}{Im}
\begin{document}

\title{Entanglement conditions and polynomial identities}

\author{E. Shchukin}
\email{evgeny.shchukin@gmail.com}
\affiliation{Lehrstuhl f\"ur Modellierung und Simulation,
Universit\"at Rostock, D-18051 Rostock, Germany}

\begin{abstract}
We develop a rather general approach to entanglement characterization based on
convexity properties and polynomial identities. This approach is applied to
obtain simple and efficient entanglement conditions which work equally well in
both discrete as well as continuous-variable environments. Examples of violations
of our conditions are presented.
\end{abstract}

\pacs{03.65.Ud, 03.65.Ta, 42.50.Dv}

\maketitle

One of the most fascinating features of Quantum Mechanics is the phenomenon of
entanglement. Despite its long history, originated with the seminal Einstein,
Podolsky and Rosen paper \cite{PhysRev.47.777} in 1935, many questions about
this phenomenon are still open. The basic problem is the ability to distinguish
entangled states from non-entangled (i.e. separable) ones. The first complete
characterization of entangled states has been given in \cite{PhysRevA.69.022308}
in the form of a hierarchy of conditions whose first level is the famous
Peres-Horodecki positivity of partial transposition condition
\cite{PhysRevLett.77.1413, Horodecki19961} (another approach has been developed in
\cite{PhysRevA.79.022318}). This hierarchy is a numerical
procedure, with all restrictions and limitations of a numerical algorithm in a
setting where an analytical description of the problem at hand is desirable.
That is why the existence of a complete numerical characterization does not
eliminate the need in analytical conditions.

A real-valued function $f(\hat{\varrho})$, defined on the set of all density
operators, is called convex if the inequality 
\begin{equation}\label{eq:convex} 
    f\Bigl(\sum_n p_n \hat{\varrho}_n\Bigr) \leqslant \sum_n p_n f(\hat{\varrho}_n) 
\end{equation} 
is valid for all states $\hat{\varrho}_n$ and for all numbers $p_n \geqslant 0$
with the normalization condition $\sum_n p_n = 1$. The simplest example of 
a convex function is the average value $\langle\hat{A}\rangle_{\hat{\varrho}}$
for a fixed operator $\hat{A}$, for which the condition \eqref{eq:convex} becomes 
equality. If $F: \mathbf{R} \to \mathbf{R}$ is a convex function of a real argument
then $F(\langle\hat{A}\rangle_{\hat{\varrho}})$ is also convex. For example,
$|\langle\hat{A}\rangle_{\hat{\varrho}}|$ and $\langle\hat{A}\rangle^2_{\hat{\varrho}}$ 
are convex functions of $\hat{\varrho}$ (for fixed $\hat{A}$). In fact,
$\langle\hat{A}\rangle^n_{\hat{\varrho}}$ is convex if $n = 2n'$ is an even number.

A real-valued function $g(\hat{\varrho})$ is called concave if the inequality,
opposite to the one given by Eq.~\eqref{eq:convex}, is valid
\begin{equation}\label{eq:concave}
    g\Bigl(\sum_n p_n \hat{\varrho}_n\Bigr) \geqslant \sum_n p_n g(\hat{\varrho}_n).
\end{equation}
In other words, a function is concave if it increases on mixtures of states.
Examples of a concave function are the average
$\langle\hat{A}\rangle_{\hat{\varrho}}$ and the variation
$\sigma^2_{\hat{A}}(\hat{\varrho}) = \langle\hat{A}^2\rangle -
\langle\hat{A}\rangle^2$ of the observable $\hat{A}$. Note, that the average
$\langle\hat{A}\rangle_{\hat{\varrho}}$, being a linear function of the quantum
state, is both convex and concave. A trivial special case is the operator $\hat{A} = \alpha \cdot \hat{1}$ of multiplication
by a constant $\alpha$, or simply a constant operator, for which $\langle\hat{A}\rangle_{\hat{\varrho}} = \alpha$ is both convex and
concave.

After these preliminary definitions we can formulate the following lemma, which
plays a central role in our work:
\begin{lemma}\label{lemma1}
Let $f(\hat{\varrho})$ be
a convex function and $g(\hat{\varrho})$ be concave. If the inequality 
\begin{equation}\label{eq:fg}
    f(\hat{\varrho}_n) \leqslant g(\hat{\varrho}_n) 
\end{equation}
is valid for some states $\hat{\varrho}_n$, $n = 1, 2, \ldots$, then it is also
valid for all their mixtures $\sum_n p_n \hat{\varrho}_n$, where $p_n \geqslant
0$ and $\sum_n p_n = 1$.
\end{lemma}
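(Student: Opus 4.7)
The plan is to prove the inequality $f(\sum_n p_n \hat{\varrho}_n) \leqslant g(\sum_n p_n \hat{\varrho}_n)$ by a single three-step chain of inequalities linking $f$ evaluated at the mixture to $g$ evaluated at the mixture, threading through the sum $\sum_n p_n f(\hat{\varrho}_n)$ and the sum $\sum_n p_n g(\hat{\varrho}_n)$.

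First I would invoke convexity of $f$, which is the defining inequality \eqref{eq:convex} applied to the given weights $p_n$ and states $\hat{\varrho}_n$, to get $f(\sum_n p_n \hat{\varrho}_n) \leqslant \sum_n p_n f(\hat{\varrho}_n)$. Next, since $p_n \geqslant 0$ and the pointwise hypothesis \eqref{eq:fg} gives $f(\hat{\varrho}_n) \leqslant g(\hat{\varrho}_n)$ for every $n$, multiplying by $p_n$ and summing preserves the inequality, yielding $\sum_n p_n f(\hat{\varrho}_n) \leqslant \sum_n p_n g(\hat{\varrho}_n)$. Finally I would invoke concavity of $g$ via \eqref{eq:concave} to obtain $\sum_n p_n g(\hat{\varrho}_n) \leqslant g(\sum_n p_n \hat{\varrho}_n)$. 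Concatenating these three inequalities gives the claim.

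There is essentially no obstacle: the lemma is a direct transcription of the convex/concave definitions together with nonnegativity of $p_n$. The only subtlety worth flagging is that one must be careful that $\sum_n p_n \hat{\varrho}_n$ is itself a valid density operator (guaranteed by $p_n \geqslant 0$ and $\sum_n p_n = 1$), so that $f$ and $g$ are defined there, and that convergence issues, if the sum is infinite, are inherited from the definitions of convexity and concavity used throughout the paper. No additional structure of $f$ and $g$ beyond convexity and concavity is required, which is exactly what makes the lemma a convenient workhorse for the entanglement conditions to follow.
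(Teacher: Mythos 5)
Your proof is correct and is exactly the paper's argument: the three-step chain $f(\hat{\varrho}) \leqslant \sum_n p_n f(\hat{\varrho}_n) \leqslant \sum_n p_n g(\hat{\varrho}_n) \leqslant g(\hat{\varrho})$, using convexity of $f$, the pointwise hypothesis weighted by $p_n \geqslant 0$, and concavity of $g$. Your additional remarks on well-definedness of the mixture are sensible but not needed beyond what the paper assumes.
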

\begin{proof}
The proof is rather simple: for a given mixture $\hat{\varrho} = \sum_n p_n \hat{\varrho}_n$ we have the inequalities
\begin{equation}
    f(\hat{\varrho}) \leqslant \sum_n p_n f(\hat{\varrho}_n) \leqslant \sum_n p_n g(\hat{\varrho}_n)
    \leqslant g(\hat{\varrho}).
\end{equation}
Roughly speaking, the left-hand side does not increase on mixtures, but the right-hand side does not decrease, which completes
the proof.
\end{proof}

As a consequence of this lemma, we obtain the following result. If the
inequality \eqref{eq:fg} is valid for all factorizable states $\hat{\varrho}_n =
\hat{\varrho}^a_n \otimes \hat{\varrho}^b_n \otimes \ldots \otimes
\hat{\varrho}^z_n$ then it is also valid for all separable states. A violation
of this inequality is a sufficient condition for entanglement. This
characterization is complete in the sense that for any entangled quantum state
$\hat{\varrho}$ one can find such a pair of functions $f$ and $g$ which satisfy
Lemma~\ref{lemma1} (with $\hat{\varrho}_n$ being separable states) and for which
$f(\hat{\varrho}) > g(\hat{\varrho})$. In fact, as it has been proven in
\cite{Horodecki19961}, for any entangled state $\hat{\varrho}$ there is an
observable $\hat{E}$ such that $\langle\hat{E}\rangle_{\hat{\tau}} \geqslant 0$
for all separable states $\hat{\tau}$, but for which
$\langle\hat{E}\rangle_{\hat{\varrho}} < 0$. Let us take $f(\hat{\varrho}) = \langle 0 \cdot \hat{1} \rangle_{\hat{\varrho}} = 0$
and $g(\hat{\varrho}) = \langle\hat{E}\rangle_{\hat{\varrho}}$, then $f$ is
convex (and concave) and $g$ is concave (and convex). According to the choice of $\hat{E}$ the inequality
\eqref{eq:fg} is satisfied for all separable states, but $f(\hat{\varrho}) >
g(\hat{\varrho})$ for the given entangled state $\hat{\varrho}$.

Though this characterization is general and complete, it is not very useful in
practice since it is not clear how to choose appropriate functions $f$ and $g$.
Thereby, the main difficulty to get efficient entanglement conditions using 
Lemma~\ref{lemma1} is to find pairs of functions $f$ and $g$, one of which is
convex and the other is concave, and which satisfy the inequality \eqref{eq:fg}
for all factorizable states. Moreover, it must be possible to violate this
inequality. Finding such pairs of function is the main result of our work.

Before we formulate our first result, we prove a couple of simple lemmas. The
following one
deals with the variances of factorizable states.
\begin{lemma}\label{lemma-1}
Factorizable quantum states $\hat{\varrho} = \hat{\varrho}^a \otimes
\hat{\varrho}^b$ satisfy the inequality $\sigma_{AB}(\hat{\varrho}^a \otimes
\hat{\varrho}^b) \geqslant \sigma_A(\hat{\varrho}^a) \sigma_B(\hat{\varrho}^b)$
for arbitrary observables $\hat{A}$ and $\hat{B}$ of different degrees of
freedom.
\end{lemma}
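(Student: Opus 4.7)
The plan is to exploit the product structure of $\hat{\varrho} = \hat{\varrho}^a \otimes \hat{\varrho}^b$ to reduce everything to an elementary algebraic identity in the first and second moments. Since $\hat{A}$ acts on the $a$ subsystem and $\hat{B}$ on the $b$ subsystem, they commute and, on a factorizable state,
$\langle \hat{A}\hat{B}\rangle_{\hat{\varrho}} = \langle\hat{A}\rangle_{\hat{\varrho}^a}\langle\hat{B}\rangle_{\hat{\varrho}^b}$
and $\langle(\hat{A}\hat{B})^2\rangle_{\hat{\varrho}} = \langle\hat{A}^2\rangle_{\hat{\varrho}^a}\langle\hat{B}^2\rangle_{\hat{\varrho}^b}$. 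This is the only structural ingredient needed.

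First I would introduce the shorthand $\alpha = \langle\hat{A}\rangle_{\hat{\varrho}^a}$, $\alpha_2 = \langle\hat{A}^2\rangle_{\hat{\varrho}^a}$ and similarly $\beta, \beta_2$ for $\hat{B}$, so that $\sigma_A^2(\hat{\varrho}^a) = \alpha_2 - \alpha^2$, $\sigma_B^2(\hat{\varrho}^b) = \beta_2 - \beta^2$, and $\sigma_{AB}^2(\hat{\varrho}^a\otimes\hat{\varrho}^b) = \alpha_2\beta_2 - \alpha^2\beta^2$ by the observation above.

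Second, I would compute the difference $\sigma_{AB}^2 - \sigma_A^2\sigma_B^2$ directly by expanding $(\alpha_2 - \alpha^2)(\beta_2 - \beta^2)$ and collecting terms. The cross terms combine to give
$\sigma_{AB}^2 - \sigma_A^2\sigma_B^2 = \beta^2(\alpha_2 - \alpha^2) + \alpha^2(\beta_2 - \beta^2) = \beta^2\sigma_A^2 + \alpha^2\sigma_B^2 \geqslant 0$,
which even yields the sharper identity $\sigma_{AB}^2 = \sigma_A^2\sigma_B^2 + \alpha^2\sigma_B^2 + \beta^2\sigma_A^2$. Taking the (non-negative) square root then delivers the stated inequality.

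I do not expect any real obstacle here; the claim is essentially an algebraic rearrangement once factorization is used. The only point to flag is the convention that $\sigma_A$, $\sigma_B$, $\sigma_{AB}$ denote non-negative standard deviations, so that passing from the squared to the unsquared inequality is legitimate, and that the commutativity of $\hat{A}\otimes\hat{1}$ with $\hat{1}\otimes\hat{B}$ makes $\sigma_{AB}$ well defined as the standard deviation of the self-adjoint operator $\hat{A}\otimes\hat{B}$.
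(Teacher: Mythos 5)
Your proof is correct and follows essentially the same route as the paper: both use the factorization of moments on product states and the substitution $\langle\hat{A}^2\rangle = \sigma_A^2 + \langle\hat{A}\rangle^2$ to arrive at the identity $\sigma_{AB}^2 = \sigma_A^2\sigma_B^2 + \langle\hat{A}\rangle^2\sigma_B^2 + \langle\hat{B}\rangle^2\sigma_A^2$, whose extra terms are manifestly non-negative. Your explicit remarks on taking non-negative square roots and on the well-definedness of $\sigma_{AB}$ are sound but add nothing beyond the paper's argument.
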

\begin{proof}
In fact, for the factorizable state $\hat{\varrho}$ we have the relation
$\langle\hat{A}\hat{B}\rangle_{\hat{\varrho}} =
\langle\hat{A}\rangle_{\hat{\varrho}^a}\langle\hat{B}\rangle_{\hat{\varrho}^b}$
and analogously for the squares of observables, so we have
\begin{equation}
    \sigma^2_{AB}(\hat{\varrho}) = \langle\hat{A}^2\rangle_{\hat{\varrho}^a}\langle\hat{B}^2\rangle_{\hat{\varrho}^b} -
    \langle\hat{A}\rangle^2_{\hat{\varrho}^a}\langle\hat{B}\rangle^2_{\hat{\varrho}^b}.
\end{equation}
Using the identities of the form $\langle\hat{A}^2\rangle =
\sigma^2_A(\hat{\varrho}) + \langle\hat{A}\rangle^2$, we obtain the relation
$\sigma^2_{AB}(\hat{\varrho}) = \sigma^2_A(\hat{\varrho}^a)\sigma^2_B(\hat{\varrho}^b) + \ldots$,
where the dots on the right-hand side of this equality stand for the sum
$\sigma^2_A(\hat{\varrho}^a)\langle\hat{B}\rangle^2_{\hat{\varrho}^b}+
\langle\hat{A}\rangle^2_{\hat{\varrho}^a}\sigma^2_B(\hat{\varrho}^b)$. Since
this expression is always non-negative we get the desired inequality.
\end{proof}

The next lemma concerns quadratic polynomials, non-negative on the right half of
the real line and puts some condition on the coefficients of such polynomials.
\begin{lemma}\label{lemma-abc}
If the inequality $a\lambda^2 - b\lambda + c \geqslant 0$,
is valid for all $\lambda \geqslant 0$, where $a, b, c \geqslant 0$, then $b^2 \leqslant 4ac$.
\end{lemma}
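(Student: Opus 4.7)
The plan is to locate the minimum of the quadratic $p(\lambda) = a\lambda^2 - b\lambda + c$ on the half-line $\lambda \geqslant 0$ and force it to be non-negative. The essential observation, which is really the only substantive step, is that the sign assumptions $a \geqslant 0$ and $b \geqslant 0$ guarantee that the vertex of the parabola already lies in the closed right half-line, i.e., precisely in the region where the hypothesis applies.

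First I would dispose of the degenerate case $a = 0$ separately: the inequality $-b\lambda + c \geqslant 0$ for all $\lambda \geqslant 0$, together with $b \geqslant 0$, forces $b = 0$ (by letting $\lambda \to \infty$), so the desired conclusion $b^2 \leqslant 4ac$ collapses to $0 \leqslant 0$.

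For the main case $a > 0$, $p$ is an upward-opening parabola with a unique critical point at $\lambda^\star = b/(2a)$, which is non-negative thanks to $a > 0$ and $b \geqslant 0$. Hence $\lambda^\star$ lies in the region covered by the assumed inequality, and evaluation gives $0 \leqslant p(\lambda^\star) = c - b^2/(4a)$; multiplying through by $4a > 0$ yields $b^2 \leqslant 4ac$.

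The main thing to be wary of --- though it is not really an obstacle so much as a remark --- is that the sign assumption on $b$ is indispensable. Without $b \geqslant 0$ the vertex could sit at a negative $\lambda$ and the hypothesis on $[0,\infty)$ would impose no constraint at the true minimum; for instance, $p(\lambda) = \lambda^2 + \lambda$ (with $b = -1$, $c = 0$) is non-negative on $[0,\infty)$ yet $b^2 = 1 > 0 = 4ac$. So the argument is really just completion of the square, combined with the remark that the sign conditions place the vertex inside the interval of interest, and no further machinery is needed.
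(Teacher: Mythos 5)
Your proof is correct and follows essentially the same route as the paper's: the degenerate case $a=0$ is handled by letting $\lambda \to \infty$ to force $b=0$, and for $a>0$ one completes the square (equivalently, evaluates at the vertex $\lambda^\star = b/(2a)$, which lies in $[0,\infty)$ because $b \geqslant 0$) to obtain $c - b^2/(4a) \geqslant 0$. Your added counterexample $\lambda^2 + \lambda$ showing the necessity of $b \geqslant 0$ is a nice touch but does not change the substance of the argument.
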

\begin{proof}
In fact, if $a = 0$ then $b$ must be zero too, in other case this inequality, 
which in this case reads as $-b\lambda + c \geqslant 0$, is
violated for sufficiently large $\lambda$. If $a > 0$ then the left-hand side of
the inequality can be written as
$a\lambda^2 - b\lambda +c = a(\lambda - b/(2a))^2 + c - b^2/(4a)$,
and for $\lambda = b/(2a) \geqslant 0$ it must be non-negative, from which we
obtain the desired inequality $b^2 \leqslant 4ac$.
\end{proof}

Now we are ready to prove one of our main results.
\begin{thrm}\label{thm1}
For arbitrary observables $\hat{A}$, $\hat{A}'$ and $\hat{B}$, $\hat{B}'$ of different degrees of freedom
the inequality
\begin{equation}\label{eq:main1}
    \sigma_{AB} \sigma_{A'B'} \geqslant \frac{1}{4} |\langle[\hat{A},\hat{A}'][\hat{B},\hat{B}']\rangle|
\end{equation}
is valid for all bipartite separable quantum states.
\end{thrm}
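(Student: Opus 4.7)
The plan is to apply Lemma~\ref{lemma1} with an auxiliary parameter $\lambda \geqslant 0$, and then to use Lemma~\ref{lemma-abc} to collapse the resulting family of inequalities into the claimed product bound. The natural candidate $g(\hat{\varrho})=\sigma_{AB}\sigma_{A'B'}$ is inconvenient because the product of two non-negative concave functions is not automatically concave, so I would instead replace the product by the quadratic form $\lambda^2\sigma^2_{AB}+\sigma^2_{A'B'}$, which is manifestly concave as a sum of concave variances.

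First I would check the Hermiticity structure of the right-hand side of \eqref{eq:main1}. Because $\hat{A},\hat{A}'$ and $\hat{B},\hat{B}'$ are Hermitian, each of $[\hat{A},\hat{A}']$ and $[\hat{B},\hat{B}']$ is anti-Hermitian; since they act on different subsystems they commute, so their product is Hermitian. Consequently $\langle [\hat{A},\hat{A}'][\hat{B},\hat{B}']\rangle_{\hat{\varrho}}$ is real and linear in $\hat{\varrho}$, and $\big|\langle [\hat{A},\hat{A}'][\hat{B},\hat{B}']\rangle\big|$ is a convex function of the state.

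Next I would verify, for every factorizable $\hat{\varrho}^a\otimes\hat{\varrho}^b$ and every $\lambda\geqslant 0$, the inequality
\begin{equation*}
\tfrac{\lambda}{2}\big|\langle [\hat{A},\hat{A}'][\hat{B},\hat{B}']\rangle\big| \leqslant \lambda^2\sigma^2_{AB}+\sigma^2_{A'B'}.
\end{equation*}
Lemma~\ref{lemma-1} bounds the right-hand side below by $\lambda^2\sigma^2_A\sigma^2_B+\sigma^2_{A'}\sigma^2_{B'}$, and AM--GM gives $2\lambda\,\sigma_A\sigma_{A'}\sigma_B\sigma_{B'}$. The ordinary Heisenberg relation on each subsystem yields $\sigma_A\sigma_{A'}\geqslant\frac{1}{2}|\langle[\hat{A},\hat{A}']\rangle_{\hat{\varrho}^a}|$ and the analogous bound on side $b$, while factorizability gives $\langle[\hat{A},\hat{A}'][\hat{B},\hat{B}']\rangle = \langle[\hat{A},\hat{A}']\rangle_{\hat{\varrho}^a}\langle[\hat{B},\hat{B}']\rangle_{\hat{\varrho}^b}$, closing the chain.

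With the pair $f(\hat{\varrho})=\frac{\lambda}{2}|\langle [\hat{A},\hat{A}'][\hat{B},\hat{B}']\rangle|$ (convex) and $g(\hat{\varrho})=\lambda^2\sigma^2_{AB}+\sigma^2_{A'B'}$ (concave), Lemma~\ref{lemma1} extends the inequality from factorizable to all separable states, so
\begin{equation*}
\lambda^2\sigma^2_{AB} - \tfrac{1}{2}\big|\langle [\hat{A},\hat{A}'][\hat{B},\hat{B}']\rangle\big|\,\lambda + \sigma^2_{A'B'}\geqslant 0
\end{equation*}
for every separable $\hat{\varrho}$ and every $\lambda\geqslant 0$. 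Finally, setting $a=\sigma^2_{AB}$, $b=\frac{1}{2}|\langle[\hat{A},\hat{A}'][\hat{B},\hat{B}']\rangle|$, $c=\sigma^2_{A'B'}$ in Lemma~\ref{lemma-abc} yields $b^2\leqslant 4ac$, which is precisely \eqref{eq:main1}. The only genuinely nontrivial step is the conceptual one of introducing $\lambda$ so as to trade the awkward product $\sigma_{AB}\sigma_{A'B'}$ for a sum of two manifestly concave variances; once that substitution is in place, every step is a routine application of either Lemma~\ref{lemma-1}, AM--GM, the single-party uncertainty relation, or Lemma~\ref{lemma-abc}.
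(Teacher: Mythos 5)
Your proposal is correct and takes essentially the same route as the paper: the same chain of Lemma~\ref{lemma-1}, AM--GM, the subsystem Heisenberg relations with the factorization $\langle[\hat{A},\hat{A}'][\hat{B},\hat{B}']\rangle = \langle[\hat{A},\hat{A}']\rangle_{\hat{\varrho}^a}\langle[\hat{B},\hat{B}']\rangle_{\hat{\varrho}^b}$, then Lemma~\ref{lemma1} to pass to separable states, then Lemma~\ref{lemma-abc} on the quadratic in $\lambda$. The only cosmetic difference is that you build $\lambda$ into the convex/concave pair before invoking Lemma~\ref{lemma1}, whereas the paper applies Lemma~\ref{lemma1} once at $\lambda = 1$ and generates the quadratic family afterwards by rescaling $\hat{A}\to\sqrt{\lambda}\hat{A}$, $\hat{B}\to\sqrt{\lambda}\hat{B}$ --- the two orderings are logically equivalent, and your explicit Hermiticity check of the commutator product is a sound addition left implicit in the paper.
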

\begin{proof}
Using Lemma \ref{lemma-1}, we can estimate the quantity $\sigma^2_{AB}(\hat{\varrho}) + \sigma^2_{A'B'}(\hat{\varrho})$ for a factorizable state
$\hat{\varrho} = \hat{\varrho}^a \otimes \hat{\varrho}^b$ as follows:
$\sigma^2_{AB} + \sigma^2_{A'B'} \geqslant
\sigma^2_A(\hat{\varrho}^a) \sigma^2_B(\hat{\varrho}^b) +\sigma^2_{A'}(\hat{\varrho}^a) \sigma^2_{B'}(\hat{\varrho}^b)$. 
The right-hand side of this inequality is bounded from below by the product $2
\sigma_A(\hat{\varrho}^a) \sigma_{A'}(\hat{\varrho}^a) \sigma_B(\hat{\varrho}^b)
\sigma_{B'}(\hat{\varrho}^b)$. Applying the Heisenberg uncertainty relation for
the observables $\hat{A}$ and $\hat{A}'$ to the product of the first two variances
(expressed as
$\sigma_{\hat{A}}(\hat{\varrho}^a)\sigma_{\hat{A}'}(\hat{\varrho}^a) \geqslant
(1/2)|\langle[\hat{A}, \hat{A}']\rangle|_{\hat{\varrho}^a}$) and the same
relation for the observables $\hat{B}$ and $\hat{B}'$ to the product of the last two
variances and taking into account the identity $\langle[\hat{A},
\hat{A}']\rangle_{\hat{\varrho}^a} \langle[\hat{B},
\hat{B}']\rangle_{\hat{\varrho}^b} = \langle[\hat{A}, \hat{A}'][\hat{B},
\hat{B}']\rangle_{\hat{\varrho}}$, we finally get the inequality
\begin{equation}\label{eq:sigma}
    \sigma^2_{AB}(\hat{\varrho}) + \sigma^2_{A'B'}(\hat{\varrho}) \geqslant \frac{1}{2} |\langle[\hat{A}, \hat{A}'][\hat{B}, \hat{B}']\rangle_{\hat{\varrho}}|,
\end{equation}
for factorizable states. As we have already mentioned, the variance
$\sigma^2_{AB}(\hat{\varrho})$ is a concave function of $\hat{\varrho}$ (for
fixed observables $\hat{A}$ and $\hat{B}$) and the quantity
$|\langle\hat{C}\rangle_{\hat{\varrho}}|$ is convex, so we can apply
Lemma~\ref{lemma1} and conclude that the inequality \eqref{eq:sigma} is valid
for all bipartite separable states. Now lets scale the observables $\hat{A}$ and
$\hat{B}$ by a non-negative factor $\sqrt{\lambda}$, $\lambda \geqslant 0$:
$\hat{A} \to \sqrt{\lambda} \hat{A}$, $\hat{B} \to \sqrt{\lambda} \hat{B}$.
After scaling we get that any bipartite separable quantum state satisfies the
inequality
$\sigma^2_{AB} \lambda^2 - (1/2)|\langle[\hat{A}, \hat{A}'][\hat{B}, \hat{B}']\rangle|\lambda + \sigma^2_{A'B'} \geqslant 0$
for all non-negative $\lambda$. Applying Lemma \ref{lemma-abc}, we get the desired inequality \eqref{eq:main1}.
\end{proof}

The proof of this theorem shows that the inequality given by
Eq.~\eqref{eq:main1} immediately follows from Lemma~\ref{lemma-1} for a
factorizable state. But one cannot use Lemma~\ref{lemma-1} for a general
separable state, since the left-hand side of the inequality \eqref{eq:main1},
which is a product of two convex functions, may not be convex, as the product
$x^2(x-1)^2$ of two functions of real argument illustrates. Both factors
(parabolas) are convex but their product is not. That is why we need the trick
with the scaling of observables.

It is interesting to compare the inequality \eqref{eq:main1} and the Heisenberg
uncertainty relation for the observables $\hat{A}\hat{B}$ and $\hat{A}'\hat{B}'$.
The latter gives us the following lower bound for the product of the dispersions
of the observables under study: $\sigma_{AB} \sigma_{A'B'} \geqslant
(1/2)|\langle[\hat{A}\hat{B}, \hat{A}'\hat{B}']\rangle|$. Note that this
inequality is valid for all bipartite quantum states. Our inequality
\eqref{eq:main1} gives another lower bound for the same product of dispersions,
but this bound is guaranteed to be valid only for separable states (there can be
entangled states which also satisfy this bound). If, for a given state and
observables, we have
$|\langle[\hat{A},\hat{A}'][\hat{B},\hat{B}']\rangle| > 2|\langle[\hat{A}\hat{B}, \hat{A}'\hat{B}']\rangle|$,
then the inequality \eqref{eq:main1} gives a stronger restriction then the
Heisenberg relation so it may be possible to violate it (this means
to calculate the product of dispersions and compare it with 
$(1/4)|\langle[\hat{A},\hat{A}'][\hat{B},\hat{B}']\rangle|$). In other case, the
Heisenberg uncertainty relation prohibits violations of the inequality
\eqref{eq:main1}. As we will see below, our inequality can be violated in many
important cases, both discrete as well as continuous variable ones. As a measure 
of violation we use the quantity
\begin{equation}
    V = \frac{|\langle[\hat{A},\hat{A}'][\hat{B},\hat{B}']\rangle|}{4\sigma_{AB} \sigma_{A'B'}}.
\end{equation}
According to the inequality \eqref{eq:main1}, $V \leqslant 1$ for all separable states, so
$V > 1$ is a sufficient condition for entanglement.

Our first example of violation of the inequality \eqref{eq:main1} is for the
especially simple choice of the observables $\hat{A} = \hat{x}_a$, $\hat{A}' =
\hat{p}_a$, $\hat{B} = \hat{p}_b$ and $\hat{B}' = \hat{x}_b$, the position and momentum operators. In this case the
inequality \eqref{eq:main1} reads as
\begin{equation}\label{eq:xp}
    \sigma_{xp} \sigma_{px} \geqslant \frac{1}{4}.
\end{equation}
We have $|\langle[\hat{A}\hat{B}, \hat{A}'\hat{B}']\rangle| =
|\im\langle\hat{a}^2-\hat{b}^2\rangle|$, where $\hat{a}$
and $\hat{b}$ are annihilation operators of the first and the second mode respectively, and in many cases this quantity is zero.
Let us take the state 
\begin{equation}\label{eq:c}
    |\psi\rangle = \sum^{+\infty}_{n=0}c_n|2n,2n\rangle
\end{equation}
with real coefficients $c_n$, where $|n\rangle$ are the Fock states. One can easily check that for this state we have 
$\langle\hat{x}_a \hat{p}_b\rangle = \langle\hat{p}_a \hat{x}_b\rangle = 0$.
To compute the dispersions note that $\langle\hat{x}^2_a \hat{p}^2_b\rangle 
= \langle\hat{p}^2_a \hat{x}^2_b\rangle$ and that
\begin{equation}
    \hat{x}^2_a \hat{p}^2_b = \frac{1}{4}\bigl((2\hat{a}^\dagger\hat{a}+1)(2\hat{b}^\dagger\hat{b} + 1)
    - \hat{a}^2\hat{b}^2 - \hat{a}^{\dagger 2}\hat{b}^{\dagger 2}\bigr) +\ldots,
\end{equation}
where the dots stand for the terms that do not contribute to $\langle\hat{x}^2_a \hat{p}^2_b\rangle$.
The product of the dispersions reads as
\begin{equation}
    \sigma_{xp} \sigma_{px} = \frac{1}{4} + Q(c_0, c_1, \ldots),
\end{equation}
where $Q = Q(c_0, c_1, \ldots)$ is a quadratic form of the coefficients $c_n$, $n=0, 1, \ldots$
\begin{equation}
    Q = \sum^{+\infty}_{n=0} \bigl(2n(2n+1)c^2_n - (n+1)(2n+1)c_nc_{n+1}\bigr).
\end{equation}
The matrix of this quadratic form is given by
\begin{equation}
    C = 
    \begin{pmatrix}
        0 & 1/2 & 0 & 0 & 0 & \ldots \\
        1/2 & 6 & 3 & 0 & 0 & \ldots \\
        0 & 3 & 20 & 15/2 & 0 & \ldots \\
        0 & 0 & 15/2 & 42 & 14 & \ldots \\
        0 & 0 & 0 & 14 & 72 & \ldots \\
        \hdotsfor{6}
    \end{pmatrix}.
\end{equation}
Let us consider a truncated $(N+1) \times (N+1)$ matrix $C_{N}$, which corresponds 
to the truncated state \eqref{eq:c} (where $c_n = 0$ if $n > N$). The minimal
value of the expression $\vec{c}^T C_N \vec{c}$ for $\vec{c} = (c_0, c_1,
\ldots, c_N)$ with the normalization condition  $|\vec{c}| = 1$ is the minimal
eigenvalue $\lambda_{\mathrm{min}}$ of the matrix $C_N$ and the corresponding
eigenvector gives the coefficients $c_k$, $k = 0, 1, \ldots, N$. With \textsl{Mathematica} 
we have obtained that
$\lambda_{\mathrm{min}}$ tends to the value $\approx -0.04495$ as $n \to
+\infty$. This corresponds to the maximal value 
\begin{equation}
    V_{\mathrm{max}} = \frac{1/4}{(1/4)+\lambda_{\mathrm{min}}} \approx 1.2192.
\end{equation}
For the truncated state $|\psi_2\rangle = c_0|00\rangle + c_1|22\rangle$, where
$c_0$ and $c_1$ are real and $c^2_0 + c^2_1 = 1$, we have $\sigma^2_{xp} = \sigma^2_{px} 
= (1/4)+6c^2_1 - c_0
c_1$. If we take the coefficients such that $0 < c_1 < c_0/6$ then we get
$\sigma_{xp} \sigma_{px} < 1/4$. The violation $V$ as a function of $c_0$ 
(assuming that $c_1 = \sqrt{1-c^2_0}$) takes the
maximum for $c_0 \approx 0.997$, $c_1 \approx 0.077$ and the maximal value is
$V \approx 1.197$. This means that even the simplest truncated state $|\psi_2\rangle$
gives a rather good violation.

Now consider a mixture of the state \eqref{eq:c} with vacuum
\begin{equation}\label{eq:mixed}
    \hat{\varrho} = p |\psi\rangle\langle\psi| + (1-p)|00\rangle\langle 00|, 
\end{equation}
where $0 \leqslant p \leqslant 1$. It is easy to check that for this mixed state we have 
\begin{equation}
    \sigma_{xp} \sigma_{px} = \frac{1}{4} + pQ(c_0, c_1, \ldots).
\end{equation}
For $p = 0$, when the state \eqref{eq:mixed} is simply vacuum, there is no violation, as it must be. 
But for any $0 < p \leqslant 1$ we have
\begin{equation}
    V_{\mathrm{max}}(p) = \frac{1/4}{1/4 + p \lambda_{\mathrm{min}}} > 1,
\end{equation}
since $\lambda_{\mathrm{min}} < 0$. We see that even a tiny fraction of the entangled state \eqref{eq:c} in the mixture
\eqref{eq:mixed} can be detected with the condition \eqref{eq:main1}.

Let us choose two orthogonal quantum states $|0\rangle$ and $|1\rangle$, $\langle 0|1\rangle = 0$. 
Define the Pauli-like operators $\hat{s}_x$, $\hat{s}_y$, $\hat{s}_z$ and $\hat{s}_0$ as follows:
\begin{equation}\label{eq:s}
\begin{split}
    \hat{s}_x &= |0\rangle\langle 1| + |1\rangle\langle 0|, \quad \hat{s}_y = -i(|0\rangle\langle 1| - |1\rangle\langle 0|), \\
    \hat{s}_z &= |0\rangle\langle 0| - |1\rangle\langle 1|, \quad \hat{s}_0 = |0\rangle\langle 0| + |1\rangle\langle 1|.
\end{split}
\end{equation}
If the system is two-dimensional, then 
$|0\rangle$ and $|1\rangle$ form a basis in the state space of the system and $\hat{s}_0$ is the identity operator $\hat{s}_0 = \hat{1}$. The operators 
\eqref{eq:s} satisfy the well-known spin-like commutation relations
$[\hat{s}_x, \hat{s}_y] = 2i\hat{s}_z$, $[\hat{s}_y, \hat{s}_z] = 2i\hat{s}_x$, $[\hat{s}_z, \hat{s}_x] = 2i\hat{s}_y$,
the equalities $\hat{s}^2_x = \hat{s}^2_y = \hat{s}^2_z = \hat{s}_0$ and the anticommutator relations $\{\hat{s}_x, \hat{s}_y\} = \{\hat{s}_x, \hat{s}_z\} = 
\{\hat{s}_y, \hat{s}_z\}= 0$, where $\{\hat{A}, \hat{A}'\} = \hat{A}\hat{A}' + \hat{A}'\hat{A}$.

We show that our condition can detect entanglement of any pure bipartite state of the form 
\begin{equation}\label{eq:Psi}
    |\Psi\rangle = \alpha |\psi_0\rangle|\varphi_0\rangle + \beta|\psi_1\rangle|\varphi_1\rangle,
\end{equation}
where $|\varphi_0\rangle$ and $|\varphi_1\rangle$ is a pair of orthogonal states of the other mode of the system and 
$|\alpha|^2 + |\beta|^2 = 1$. 
Since \eqref{eq:Psi} is, in fact, the Schimdt decomposition
of the state $|\Psi\rangle$, this state is separable if and only if either $\alpha = 0$ or $\beta = 0$.

One can verify that $\langle\hat{s}^a_x \hat{s}^b_x\rangle = -\langle\hat{s}^a_y \hat{s}^b_y\rangle = 2\re(\alpha\beta^*)$ and 
$\langle\hat{s}^a_x \hat{s}^b_y\rangle = \langle\hat{s}^a_y \hat{s}^b_x\rangle = -2\im(\alpha\beta^*)$, 
$\langle\hat{s}^a_z \hat{s}^b_z\rangle = \langle\hat{s}^a_0 \hat{s}^b_0\rangle = 1$.
Now consider the operators 
\begin{equation}
\begin{split}
    \hat{A} &= \hat{s}^a_x \cos\theta + \hat{s}^a_y \sin\theta, \ \hat{A}' = \hat{s}^a_x \cos\theta' + \hat{s}^a_y \cos\theta', \\
    \hat{B} &= \hat{s}^b_x \cos\eta + \hat{s}^b_y \sin\eta, \ \hat{B}' = \hat{s}^b_x \cos\eta' + \hat{s}^b_y \sin\eta'.
\end{split}
\end{equation}
Then we have
\begin{equation}
\begin{split}
    &\langle\hat{A}\hat{B}\rangle = 2\re(\alpha\beta^* e^{i(\theta+\eta)}), \\
    &\langle\hat{A}'\hat{B}'\rangle = 2\re(\alpha\beta^* e^{i(\theta'+\eta')}),
\end{split}
\end{equation}
$[\hat{A}, \hat{A}'] = 2i\sin(\theta'-\theta)\hat{s}^a_z$ and $[\hat{B}, \hat{B}'] = 2i\sin(\eta'-\eta)\hat{s}^b_z$. 
Let us take the angles
$\theta = -\arg\alpha$, $\eta = \arg\beta$, $\theta' = \theta + \pi/2$, $\eta' = \eta + \pi/2$, 
and we have $\langle\hat{A}\hat{B}\rangle = -\langle\hat{A}'\hat{B}'\rangle = 2|\alpha\beta|$, 
$[\hat{A}, \hat{A}'] = 2i\hat{s}^a_z$ and $[\hat{B}, \hat{B}'] = 2i\hat{s}^b_z$. 
Then the inequality \eqref{eq:main1} reads as
$1 - 4|\alpha\beta|^2 \geqslant 1$,
which is obviously violated unless $\alpha = 0$ or $\beta = 0$, that is unless
the state \eqref{eq:Psi} is factorizable. Note that in this way one can verify
entanglement of any pure state with the Schmidt number larger than 1.

The inequality \eqref{eq:main1} can also be violated by the squeezed vacuum state defined as 
\begin{equation}
    |S\rangle = \sqrt{1-\lambda^2}\sum^{+\infty}_{n=0}\lambda^n |n,n\rangle
\end{equation}
with real
parameter $\lambda$, $|\lambda|<1$. If for both modes we take the operators $\hat{A} = \sum^{+\infty}_{n=0} |2n\rangle\langle 2n+1| + |2n+1\rangle\langle 2n|$ and
$\hat{B} = \sum^{+\infty}_{n=0} -i(|2n\rangle\langle 2n+1| + |2n+1\rangle\langle 2n|)$, we get
\begin{equation}
    V = \left(\frac{1+\lambda^2}{1-\lambda^2}\right)^2.
\end{equation}
We see that in this case we have $V > 1$ for $\lambda \not= 0$ and $V \to
+\infty$ when $\lambda \to 1$.

Let us consider the inequality obtained in \cite{PhysRevLett.99.210405}, which in the bipartite case reads as
\begin{equation}\label{eq:bell}
\begin{split}
    \langle\hat{A}\hat{B}-\hat{A}'\hat{B}'\rangle^2 &+ \langle\hat{A}\hat{B}'+\hat{A}'\hat{B}\rangle^2 \\
    &\leqslant \langle(\hat{A}^2 + \hat{A}^{\prime 2})(\hat{B}^2 + \hat{B}^{\prime 2})\rangle.
\end{split}
\end{equation}
It has been proven in \cite{PhysRevLett.101.040404} that this inequality
cannot be violated in the case of quadratures. For two level systems and dichotomic observables this inequality 
becomes
\begin{equation}
    \langle\hat{A}\hat{B}-\hat{A}'\hat{B}'\rangle^2 + \langle\hat{A}\hat{B}'+\hat{A}'\hat{B}\rangle^2 \leqslant 4,
\end{equation}
which has been proven to be valid for all bipartite two-level quantum states \cite{PhysRevLett.88.230406}. 
In \cite{PhysRevLett.99.210405} it has been shown that the multipartite analogue of the inequality \eqref{eq:bell} can be violated for a
large number of modes (more than $10$). The inequality \eqref{eq:bell} easily follows from our approach with the help of the identity
\begin{equation}\label{eq:abcd}
    (a b - a' b')^2 + (a b' + a' b)^2 = (a^2 + a^{\prime 2})(b^2 + b^{\prime 2}),
\end{equation}
which expresses the multiplicativity of the norm of the complex numbers (consider the product of two complex numbers $a+ia'$ and $b+ib'$). 
Now for any factorizable state we have 
\begin{equation}
\begin{split}
    &\langle\hat{A}\hat{B}-\hat{A}'\hat{B}'\rangle^2 + \langle\hat{A}\hat{B}'+\hat{A}'\hat{B}\rangle^2 
    =\langle\hat{A}\hat{B}\rangle^2 + \langle\hat{A}\hat{B}'\rangle^2 \\
    &+ \langle\hat{A}'\hat{B}\rangle^2 
    + \langle\hat{A}'\hat{B}'\rangle^2 \leqslant \langle(\hat{A}^2 + \hat{A}^{\prime 2})(\hat{B}^2 + \hat{B}^{\prime 2})\rangle.
\end{split}
\end{equation}
As it has already been noted, the left-hand side of this inequality is convex and the right-hand side is concave 
and Lemma~\ref{lemma1} shows that the inequality \eqref{eq:bell} is valid for all separable states. The multipartite analogue of the inequality \eqref{eq:bell} 
can be obtained using the identity for product of more then two complex numbers. Complex numbers is not the only numerical system
whose norm is multiplicative. Inequalities based on the multiplicativity 
of the norms of quaternions and octonions have been obtained in \cite{PhysRevA.78.032104}.

Nothing prevents us from using the identity \eqref{eq:abcd} in the other direction and derive the inequality
\begin{equation}
\begin{split}
    &\langle(\hat{A}\hat{B}-\hat{A}'\hat{B}')^2\rangle + \langle(\hat{A}\hat{B}'+\hat{A}'\hat{B})^2\rangle \\
    &\geqslant \langle\hat{A}\hat{B}\rangle^2 + \langle\hat{A}\hat{B}'\rangle^2 + 
    \langle\hat{A}'\hat{B}\rangle^2 + \langle\hat{A}'\hat{B}'\rangle^2,
\end{split}
\end{equation}
which after some algebraic manipulations can be transformed to the following form:
\begin{equation}
    \sigma^2_{AB} + \sigma^2_{AB'} + \sigma^2_{A'B} + \sigma^2_{A'B'} \geqslant
    |\langle[\hat{A},\hat{A}'][\hat{B},\hat{B}']\rangle|.
\end{equation}
It is clear that this inequality follows from \eqref{eq:main1}.

The theorem \ref{thm1} can be straightforwardly generalized to the multipartite case.
\begin{thrm}
For arbitrary observables $\hat{A}_1$, $\hat{A}'_1$, \ldots, $\hat{A}_n$, $\hat{A}'_n$ of different degrees of freedom the inequality
\begin{equation}
    \sigma_{A_1\ldots A_n} \sigma_{A'_1\ldots A'_n} \geqslant \frac{1}{2^n} |\langle[\hat{A}_1,\hat{A}'_1]\ldots[\hat{A}_n,\hat{A}'_n]\rangle|
\end{equation}
is valid for all completely separable multipartite quantum states.
\end{thrm}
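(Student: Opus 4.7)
The plan is to mirror the bipartite proof of Theorem~\ref{thm1} step by step, replacing each ingredient by its $n$-partite analogue. The four building blocks will be: a multipartite version of Lemma~\ref{lemma-1} on products of variances, the Heisenberg uncertainty relation applied on each subsystem, Lemma~\ref{lemma1} to lift the factorizable inequality to all completely separable states, and finally a rescaling step combined with Lemma~\ref{lemma-abc} to turn a sum-of-squares inequality into the desired product bound.

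First, by induction on $n$, I would establish the multipartite variance bound $\sigma^2_{A_1\cdots A_n}(\hat{\varrho}) \geqslant \prod_{i=1}^n \sigma^2_{A_i}(\hat{\varrho}^{(i)})$ for every completely factorizable $\hat{\varrho} = \hat{\varrho}^{(1)} \otimes \cdots \otimes \hat{\varrho}^{(n)}$. The induction step merges the first $n-1$ subsystems into one tensor factor and invokes Lemma~\ref{lemma-1} with $\hat{A} = \hat{A}_1 \cdots \hat{A}_{n-1}$ and $\hat{B} = \hat{A}_n$. Applying the same bound to the primed observables and using the AM--GM inequality yields
\[
    \sigma^2_{A_1\cdots A_n} + \sigma^2_{A'_1\cdots A'_n} \geqslant 2\prod_{i=1}^n \sigma_{A_i}(\hat{\varrho}^{(i)})\sigma_{A'_i}(\hat{\varrho}^{(i)})
\]
on factorizable states. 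Next I would apply the Heisenberg relation $\sigma_{A_i}\sigma_{A'_i} \geqslant \tfrac{1}{2}|\langle[\hat{A}_i,\hat{A}'_i]\rangle_{\hat{\varrho}^{(i)}}|$ on each subsystem and use the multiplicative identity $\prod_i \langle[\hat{A}_i,\hat{A}'_i]\rangle_{\hat{\varrho}^{(i)}} = \langle[\hat{A}_1,\hat{A}'_1]\cdots[\hat{A}_n,\hat{A}'_n]\rangle_{\hat{\varrho}}$, which is immediate from the different-degrees-of-freedom assumption together with factorizability, to arrive at
\[
    \sigma^2_{A_1\cdots A_n} + \sigma^2_{A'_1\cdots A'_n} \geqslant \frac{1}{2^{n-1}}\bigl|\langle[\hat{A}_1,\hat{A}'_1]\cdots[\hat{A}_n,\hat{A}'_n]\rangle\bigr|
\]
for every completely factorizable state. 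Since the left-hand side is a sum of concave functions of $\hat{\varrho}$ while the right-hand side is convex, Lemma~\ref{lemma1} lifts this inequality to every completely separable state.

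The final step is the scaling trick. Sending $\hat{A}_i \to \sqrt{\lambda}\,\hat{A}_i$ for every $i = 1,\ldots,n$ takes $\sigma^2_{A_1\cdots A_n} \to \lambda^n \sigma^2_{A_1\cdots A_n}$ and multiplies the commutator product by $\lambda^{n/2}$. Introducing $\mu = \lambda^{n/2}$, which runs through all of $[0,+\infty)$ as $\lambda$ does, turns the resulting bound into a genuine quadratic $\mu^2 \sigma^2_{A_1\cdots A_n} - 2^{-(n-1)}\mu|\langle\cdots\rangle| + \sigma^2_{A'_1\cdots A'_n} \geqslant 0$ in $\mu$, and Lemma~\ref{lemma-abc} then delivers the factor $1/2^n$ claimed in the theorem. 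The main non-routine point I anticipate is precisely this rescaling: unlike in the bipartite case, $\lambda^n$ is not quadratic in $\lambda$ for $n > 2$, so the change of variables $\mu = \lambda^{n/2}$ must be justified explicitly. Beyond that, the argument is a clean induction over the tools already developed.
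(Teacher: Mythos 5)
Your proof is correct and is exactly the ``straightforward generalization'' of the proof of Theorem~\ref{thm1} that the paper asserts without writing out: the inductive extension of Lemma~\ref{lemma-1} to $n$ factors, AM--GM, the per-subsystem Heisenberg relations combined with factorizability of the commutator average, the lift via Lemma~\ref{lemma1}, and the final scaling step all match the intended argument. The one point where the bipartite proof does not carry over verbatim --- the scaled inequality being of degree $\lambda^n$ rather than quadratic --- is handled correctly by your substitution $\mu = \lambda^{n/2}$, since $\lambda \mapsto \lambda^{n/2}$ maps $[0,+\infty)$ bijectively onto itself and Lemma~\ref{lemma-abc} then yields the factor $1/2^n$.
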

It is easy to see that the Bell state 
\begin{equation}
    |B_n\rangle = \frac{1}{\sqrt{2}}(|0\ldots0\rangle + |1\ldots1\rangle)
\end{equation}
violates this inequality for $\hat{A}_k = \hat{s}^{(k)}_x$,
$\hat{A}'_k = \hat{s}^{(k)}_y$, which in this case reads as
\begin{equation}\label{eq:uvw2}
    \sigma_{x\ldots x}\sigma_{y\ldots y} \geqslant |\langle\hat{s}^{(1)}_z \ldots \hat{s}^{(n)}_z\rangle|.
\end{equation}
If $n$ is even then the state $|B_n\rangle$ is an eigenstate of the operators $\hat{s}^{(1)}_x \ldots \hat{s}^{(n)}_x$, 
$\hat{s}^{(1)}_y \ldots \hat{s}^{(n)}_y$ and $\hat{s}^{(1)}_z \ldots \hat{s}^{(n)}_z$ 
with eigenvalues $\pm 1$, so the left-hand side of the inequality \eqref{eq:uvw2} is zero and the right-hand side is $1$.

Our next result is based on the Ramajuan identities.
\begin{thrm}
For any observables $\hat{A}$, $\hat{A}'$ and $\hat{B}$, $\hat{B}'$ of different degrees of freedom the inequality
\begin{equation}\label{eq:main2}
\begin{split}
    &\langle\hat{A}\hat{B} + \hat{A}\hat{B}' + \hat{A}'\hat{B}\rangle^n +
    \langle\hat{A}\hat{B}' + \hat{A}'\hat{B} + \hat{A}'\hat{B}'\rangle^n \\
    &+\langle\hat{A}\hat{B} - \hat{A}'\hat{B}'\rangle^n \leqslant
    \langle(\hat{A}\hat{B}' - \hat{A}'\hat{B})^n\rangle +\\
    &\langle(\hat{A}'\hat{B} + \hat{A}'\hat{B}' + \hat{A}\hat{B})^n\rangle +
    \langle(\hat{A}'\hat{B}' + \hat{A}\hat{B} + \hat{A}\hat{B}')^n\rangle
\end{split}
\end{equation}
with $n = 2$ and $n = 4$ is valid for all bipartite separable states.
\end{thrm}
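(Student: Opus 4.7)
The plan is to follow the template of Theorem \ref{thm1}: first verify \eqref{eq:main2} on factorizable states $\hat{\varrho}=\hat{\varrho}^a\otimes\hat{\varrho}^b$, and then invoke Lemma \ref{lemma1} to extend the inequality to all bipartite separable states.

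For the factorizable case every cross expectation factorizes, so writing $w=\langle\hat{A}\rangle\langle\hat{B}\rangle$, $x=\langle\hat{A}\rangle\langle\hat{B}'\rangle$, $y=\langle\hat{A}'\rangle\langle\hat{B}\rangle$ and $z=\langle\hat{A}'\rangle\langle\hat{B}'\rangle$ (with each single-operator expectation taken in the appropriate marginal), the three terms on the left-hand side of \eqref{eq:main2} collapse to $(w+x+y)^n$, $(x+y+z)^n$ and $(w-z)^n$, and the four real numbers automatically satisfy $wz=\langle\hat{A}\rangle\langle\hat{A}'\rangle\langle\hat{B}\rangle\langle\hat{B}'\rangle=xy$. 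This is precisely the hypothesis of the Ramanujan identity
\begin{equation}\label{eq:ramanujan}
(w+x+y)^n+(x+y+z)^n+(w-z)^n=(x-y)^n+(w+x+z)^n+(w+y+z)^n,
\end{equation}
which holds for $n=2$ (by direct expansion, the discrepancy being $6(xy-wz)$) and for $n=4$ (Ramanujan's classical identity). The right-hand side of \eqref{eq:main2} is bounded from below termwise by a Jensen-type step: each of $\hat{A}\hat{B}'-\hat{A}'\hat{B}$, $\hat{A}'\hat{B}+\hat{A}'\hat{B}'+\hat{A}\hat{B}$ and $\hat{A}\hat{B}'+\hat{A}'\hat{B}'+\hat{A}\hat{B}$ is self-adjoint because $\hat{A},\hat{A}'$ commute with $\hat{B},\hat{B}'$, so for even $n$ the spectral theorem yields $\langle\hat{O}^n\rangle\geqslant\langle\hat{O}\rangle^n$. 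On a product state this produces exactly $(x-y)^n+(w+x+z)^n+(w+y+z)^n$, which by \eqref{eq:ramanujan} equals the left-hand side of \eqref{eq:main2}. Hence \eqref{eq:main2} is established on all factorizable states.

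To promote the product-state inequality to arbitrary separable mixtures I invoke Lemma \ref{lemma1}. The left-hand side of \eqref{eq:main2} is a sum of terms of the form $\langle\hat{C}\rangle^n$ with $\hat{C}$ self-adjoint and $n$ even; each such term is convex in $\hat{\varrho}$ by the remarks following Eq.~\eqref{eq:convex}, so the whole LHS is convex. The right-hand side is a sum of quantities $\langle\hat{O}^n\rangle$ for fixed operators, hence linear in $\hat{\varrho}$ and therefore in particular concave. Lemma \ref{lemma1} then propagates the inequality from the extreme points (product states) to their convex hull, which is exactly the set of bipartite separable states.

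The single non-routine step is the algebraic identity \eqref{eq:ramanujan}: one must recognise it as the Ramanujan identity and observe that factorizability of the state supplies its hypothesis $wz=xy$ for free. Once this is in place the $n=4$ case is no harder to quote than the $n=2$ case, and the remainder is the same convexity/concavity bookkeeping already used in Theorem \ref{thm1}; this also clarifies why the method is limited to those particular values of $n$ for which the Ramanujan identity holds.
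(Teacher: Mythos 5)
Your proposal is correct and takes essentially the same approach as the paper: the paper's own (very terse) proof invokes the same Ramanujan identity for $n=2,4$ and says the argument ``proceeds as above,'' which unpacks into exactly your steps --- factorize the expectations on product states (so that the identity's hypothesis $wz=xy$ holds automatically), apply the identity, bound the right-hand side via $\langle\hat{O}\rangle^n \leqslant \langle\hat{O}^n\rangle$ for even $n$, and extend from product states to all separable states by the convexity/concavity argument of Lemma~\ref{lemma1}. Your explicit statement of the constraint $wz=xy$ and the $n=2$ discrepancy $6(xy-wz)$ merely fill in details the paper leaves implicit.
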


The proof is based on the Ramanujan identity
\begin{equation}
\begin{split}
    (a b + a b' &+ a' b)^n + (a b' + a' b + a' b')^n \\
    &+ (a b - a' b')^n = (a' b + a' b' + a b)^n \\
    &+ (a' b' + a b + a b')^n + (a b' - a' b)^n,
\end{split}
\end{equation}
which is valid for $n=2, 4$, see \cite{ramanujan}, and proceeds as above. In this case we can also
get two inequalities, but one can easily see that the two are, in fact, the same, after an appropriate 
permutation of the variables.

To illustrate the possibility to violate the inequalities \eqref{eq:main2}, let us take 
$\hat{A} = \hat{s}^a_x$, $\hat{A}' = \hat{s}^a_y$, 
$\hat{B} = \hat{s}^b_x$, $\hat{B}' = \hat{s}^b_y$ and apply these inequalities to the simplest Bell state 
\begin{equation}
    |B_2\rangle = \frac{1}{\sqrt{2}}(|00\rangle + |11\rangle).
\end{equation}
The inequality \eqref{eq:main2} for $n=2$ reads as
\begin{equation}\label{eq:main2-2}
\begin{split}
    &\langle\hat{s}^a_x\hat{s}^b_x + \hat{s}^a_x\hat{s}^b_y + \hat{s}^a_y\hat{s}^b_x\rangle^2 + 
    \langle\hat{s}^a_x\hat{s}^b_y + \hat{s}^a_y\hat{s}^b_x + \hat{s}^a_y\hat{s}^b_y\rangle^2 \\
    &+\langle\hat{s}^a_x\hat{s}^b_x - \hat{s}^a_y\hat{s}^b_y\rangle^2 \leqslant 8 - 6\langle\hat{s}^a_z\hat{s}^b_z\rangle,
\end{split}
\end{equation}
and for $n=4$ as
\begin{equation}\label{eq:main2-4}
\begin{split}
    &\langle\hat{s}^a_x\hat{s}^b_x + \hat{s}^a_x\hat{s}^b_y + \hat{s}^a_y\hat{s}^b_x\rangle^4 + 
    \langle\hat{s}^a_x\hat{s}^b_y + \hat{s}^a_y\hat{s}^b_x + \hat{s}^a_y\hat{s}^b_y\rangle^4 \\
    &+\langle\hat{s}^a_x\hat{s}^b_x - \hat{s}^a_y\hat{s}^b_y\rangle^4 \leqslant 34 - 32\langle\hat{s}^a_z\hat{s}^b_z\rangle.
\end{split}
\end{equation}
It is easy to check that $\langle\hat{s}^a_x\hat{s}^b_x\rangle = -\langle\hat{s}^a_y\hat{s}^b_y\rangle = 
\langle\hat{s}^a_z\hat{s}^b_z\rangle = 1$, $\langle\hat{s}^a_x\hat{s}^b_y\rangle = 
\langle\hat{s}^a_y\hat{s}^b_x\rangle = 0$. For the inequality \eqref{eq:main2-2} we have
$6 = 1^2 + (-1)^2 + 2^2 \leqslant 2$, which is clearly wrong. The inequality \eqref{eq:main2-4}
becomes $18 = 1^4 + (-1)^4 + 2^4 \leqslant 2$, which is also wrong, so both these inequalities can be violated.

In conclusion, we have developed a general approach to entanglement characterization 
and obtained simple and powerful entanglement conditions, which can
be used for discrete as well as for continuous-variable systems. Examples of
violations of these conditions are presented. Multipartite generalization of
some of our results are also discussed.

\end{document}